\newtheorem{theorem}{Theorem}[section]
\newtheorem{lemma}[theorem]{Lemma}
\newenvironment{proof}[1][Proof]{\begin{trivlist}
\item[\hskip \labelsep {\bfseries #1}]}{\end{trivlist}}
\def\ps@pprintTitle{%
 \let\@oddhead\@empty
 \let\@evenhead\@empty
 \def\@oddfoot{}%
 \let\@evenfoot\@oddfoot}
\begin{document}
\begin{frontmatter}
\title{Credal Model Averaging for classification: representing  prior ignorance and  expert opinions.}
\author[idsia]{Giorgio Corani\corref{cor1}}
\ead{giorgio@idsia.ch}
\author[elet]{Andrea Mignatti}
\ead{andrea.mignatti@polimi.it}
\cortext[cor1]{Corresponding author}

\address[idsia]{Istituto Dalle Molle di Studi sull'Intelligenza Artificiale (IDSIA)\\Galleria 2, 6928 Manno (Lugano), Switzerland}
\address [elet]{Dipartimento di Elettronica, Informazione e Bioingegneria\\ Politecnico di Milano, Italy\\}

\begin{abstract}
Bayesian model averaging (BMA) is the state of the art approach for overcoming model uncertainty.
Yet, especially on small data sets, the results yielded by BMA might be sensitive to the prior over the models.
Credal Model Averaging (CMA) addresses this problem by substituting the single prior over the models by a \textit{set} of priors (credal set). 
Such approach solves the problem of how to choose the prior over the models and automates sensitivity analysis.
We discuss various CMA algorithms for building an ensemble of logistic regressors characterized by different sets of covariates. We show how CMA can be appropriately tuned to the case in which one
is prior-ignorant and to the case in which instead domain knowledge is available.
CMA detects \textit{prior-dependent} instances, namely instances in which a different class
is more probable depending on the prior over the models. On such instances CMA suspends the judgment, returning multiple classes.
We thoroughly compare different BMA and CMA variants on a real case study,
predicting presence of Alpine marmot burrows in an Alpine valley.
We find that BMA is almost a random guesser on the instances recognized as prior-dependent by CMA.
\end{abstract}
\end{frontmatter}

\section{Introduction}
\textit{Classification} is the problem of  predicting the outcome of a categorical variable
on the basis of several variables (called \textit{features} or \textit{covariates}).
However, there is often considerable uncertainty about which covariates should be included in the classifier.
Typically different sets of covariates are plausible given the available data. In this case  drawing
conclusions on the basis of the supposedly best single model can lead to overconfident conclusions,
overlooking the uncertainty of model selection (\emph{model uncertainty}).

Bayesian model averaging (BMA) \citep{hoeting1999bma} is a principled solution to model uncertainty.
BMA combines the inferences of multiple models;
the weights of the combination are the models' posterior probabilities.
However the results of BMA can be sensitive on 
the \textit{prior} probability assigned to the different models.
A common approach  is to assign equal prior probability
to all models (\textit{uniform prior}).
A more sophisticated solution is to adopt a \textit{ hierarchical} prior over the models, which yields inferences
less sensitive of the choice of the prior parameters \citep{Clyde2004,ley2009effect}.

However the specification of any prior implies some arbitrariness, which can lead to risky conclusions;
such risk is especially present on small data sets.
Often BMA studies \citep{wintle2003,link2006} report a sensitivity analysis, 
presenting the results obtained considering different priors over the models.

To robustly deal with the specification of the prior over the models,
we adopt a \textit{set} of priors (\textit{credal set}) over the models.
We thus adopt the paradigm of 
\emph{credal classifiers} \citep{zaffalon2001a} which extend traditional classifiers
by considering \textit{sets }of probability distributions. The main characteristic of credal
classifiers is that they allow for set-valued predictions of classes, when returning a single class is not deemed safe. Credal classifiers have been developed in the area of imprecise probability \citep{walley1991statistical}.

\textit{Credal model averaging} (CMA) \citep{corani2008credal} generalizes BMA by substituting the prior over the models by a credal set.
CMA  thus combines a set of traditional classifiers using imprecise probability.
CMA was firstly introduced \citep{corani2008credal} to create an imprecise ensemble of naive Bayes classifiers.
CMA adopts the credal set to express weak beliefs about the model prior probabilities:
by doing so, it does not commit to a single prior over the models.
As it is typical of credal classifiers,  CMA compute inferences which return \textit{interval} probabilities rather than single probabilities. For example, when classifying an instance
CMA computes the \textit{upper} and the \textit{lower} posterior probability of each class.
The length of the interval shows the sensitivity of the posterior on the prior over the models,
automating sensitivity analysis.
CMA  identifies \textit{prior-dependent} instances, namely instances in which a different class is more probable depending on the prior over the models.

In \citet{2013:23:isipta} we studied the problem of robustly predicting the
presence of Alpine marmot (\emph{Marmota marmota}) on the basis of several environmental
covariates  (slope, altitude, etc.).
Bayesian model averaging of logistic regressors is the state of the art approach
for analyzing presence/absence data \citep{wintle2003,link2006, thomson2007predicting}.
We thus devised \citep{2013:23:isipta} CMA for logistic regression
considering a constrained class of priors over the models which allowed for an analytical solution of the optimization problems. The credal set of CMA modeled a condition close to prior near-ignorance.
Moreover we presented some preliminary
results on the data set of presence of Alpine marmot collected by AM.
In particular we compared CMA against the BMA induced using the uniform prior over the models.

In this paper we extend in several respects our previous work.
From the algorithmic viewpoint we consider a more general class of distributions for the
prior probability of the models.
The new class of priors is a straightforward generalization of the previous one; yet it allows representing  prior knowledge in a much more flexible way.
As a side-effect, the  new class of priors requires a numerical solution of the
optimization problems.

We discuss three different CMA variants. The first
is our previous algorithm \citep{2013:23:isipta}.
The  new algorithm based on the more general class of priors yields two variants: one
referring to prior ignorance and one referring
to  partial prior knowledge.
To elicit prior knowledge we interviewed three experts:
two scientists who published several papers on the species
and a master student who participated in the collection of marmot data without analyzing them.

We present also a much extended empirical analysis of the Alpine marmot.
We consider the three mentioned CMA variants
and three BMA variants, which differ in the prior over the models.
Two priors are non-informative (uniform and hierarchical); the third prior is instead
based on the expert statements and is thus informative.

We assess not only the classification performance but also another important inference,
namely the posterior probability of inclusion of the covariates.

The paper is organized as follows: Section \ref{sec:bma} and  \ref{sec:cma-algo}
present the BMA and CMA algorithms; Section \ref{sec:case-study} describes the case study of
Alpine marmot and the interview of the experts;
Section \ref{sec:results} presents the empirical results.

\section{Logistic regression and Bayesian model averaging}\label{sec:bma}
The goal is to predict the outcome of the  \textit{binary} class variable $C$ which can assume values
$c_0$ or $c_1$.  
There are $k$   \textit{covariates} $\{X_1, X_2, \dots X_k\}$; an observation of the set of covariates is $\mathbf{x}=\{x_1,\ldots,x_k\}$.
Given $k$ covariates, $2^{k}$ different subsets of covariates can be defined; each subset of covariates yields 
a \textit{model structure} (or, more concisely, a \textit{structure}).
We denote by $m_i$ the i-\emph{th} model structure, by $\mathcal{X}_i$ its set of covariates and by $P(m_i|D)$ its posterior probability.
A training set of size $D$ is available for learning the models.
The data set has size $n$, namely it contains $n$ joint observations of the covariates and the class.
We denote as $P(c_1|D,\mathbf{x},m_i)$ the posterior probability of $c_1$  given
the covariate values  $\mathbf{x}$ and the 
model $m_i$  which has  been trained on data set $D$.
The logistic regression model is:
\begin{eqnarray}\label{eq:logistic}
\eta_{D,\mathbf{x},m_i}=\log\left(\frac{P(c_1|D,\mathbf{x},m_i)}{1-P(c_1|D,\mathbf{x},m_i)}\right)=
\beta_0 + \sum_{X_l \in \mathcal{X}_i}\beta_l x_{l} \label{eq:logit}
\end{eqnarray}
where $\eta_{D,\mathbf{x},m_i}$ denotes the \textit{logit} of the posterior probability of presence, $x_{l}$ the observation of  $l$-th covariate which has been included in model $m_i$ and $\beta_l$ its coefficient.

BMA addresses model uncertainty by combining the inferences of multiple models, and weighting them by the models' posterior probability.
The posterior probability of presence  is thus obtained by marginalizing out the model variable \citep{hoeting1999bma}:
\begin{eqnarray}
P(c_1|D,\mathbf{x})=\sum_{m_i\in\cal{M}} P(c_1|D,\mathbf{x},m_i) P(m_i|D)\label{eq:bma}
\end{eqnarray}
where $\cal{M}$ denotes the model space, which  contains the $2^k$ logistic regressors obtained considering all the possible subsets of features.
The posterior probability of $m_i$ given the data is computed as follows:
\begin{eqnarray} \label{eq:model_posterior}
 P(m_{i}|D)=\frac{P(m_{i})P(D|m_{i})}{\sum_{m_i\in \cal{M}}P(m_{i})P(D|m_{i})} 
\end{eqnarray}
where $P(m_{i})$ and $P(D|m_{i})$  are respectively the prior probability and the \textit{marginal likelihood} of model $m_i$.
The marginal likelihood integrates the likelihood with respect to the model parameters:
\[
P(D|m_{i})=\int P(D|m_{i},\bm\beta_i) P(\bm\beta_i|m_i) d\bm\beta_i
\]
where $\bm\beta_i$ denotes the set of parameters of model $m_i$.

A convenient approximation for computing the
models' marginal likelihood is based on the BIC \citep{raftery1995bayesian}.
The BIC of model $m_i$ is
\begin{equation}
 \mathrm{BIC_i}= -2 LL_i + |\bm{\beta}_i| \log (n)
\end{equation}
where $ LL_i$ denotes the log-likelihood of $m_i$, $|\bm{\beta}_i|$ the number of its parameters and $n$ the number
of data points on the data set.

The marginal likelihood of model $m_i$ can be approximated as:
\begin{equation}
 P(D|m_i)\approx\frac{\exp(-BIC_i/2)}{\sum_{m_i\in\cal{M}}\exp(-BIC_i/2)}.\label{eb:bic-lik}
\end{equation}
This approximation is convenient from a computational viewpoint and generally accurate; therefore, it is often adopted to compute BMA \citep{raftery1995bayesian,wintle2003,link2006}. Using the BIC approximation it is no longer necessary specifying the prior probability 
$P(\bm\beta_i|m_i)$ of the model parameters.


The posterior probability of model $m_i$ is then approximated as:
\begin{equation}
 P(m_i|D)\approx\frac{\exp(-BIC_i/2)P(m_i)}{\sum_{m_i\in\cal{M}}\exp(-BIC_i/2)P(m_i)}.
\end{equation}

A large number of covariates implies a huge model space, making it necessary to approximate the summation of Eqn.~(\ref{eq:bma});
computational strategies to this end are discussed for instance by \cite{Clyde2004}.
However our experiments involve a limited
number of covariates and thus we exhaustively sample the model space.

Often one is interested in the posterior probability of inclusion of feature $X_j$. This is the sum the posterior probabilities of the model structures
which do include $X_j$:
\begin{equation}
    P(\beta_j \neq 0)=\sum_{m_i \in \mathcal{M}}\rho_{ij}P(m_i|D)
\label{eq:post-inclusion}
    \end{equation}
where the binary variable $\rho_{ij}$ is 1 if model $m_i$ includes covariate $X_j$ and otherwise.

\subsection{Non-informative prior over  the models}\label{sec:priors}
A simple approach to set  the prior probability of the models is the \textit{independent Bernoulli} prior (IB prior).
The IB prior assumes that  each covariate is independently included in the model with identical probability $\theta$ \citep{Clyde2004}.
Denoting by $k_i$ the number of covariates included by model $m_i$ and by $k$ the total number of covariates,
the  prior probability of model $m_i$ is :
\begin{equation}\label{eq:binom}
 P(m_i)= \theta^{k_i} (1-\theta)^{k-k_i}
\end{equation}
which depends on the single parameter $\theta$.
By setting $\theta$=1/2 one obtains
the \textit{uniform} prior \textit{over the models}, which assigns to each model equal probability $1/2^k$.

However the uniform prior is quite informative if analyzed from the viewpoint of the 
\textit{model size}, namely the number of covariates included in the model.
We denote the model size by  $W$.
The IB prior implies $W$ to be binomially distributed: $W \sim Bin (\theta, k)$ \citep{ley2009effect}.
As well-known, the binomial distribution is far from flat.

A  flat prior distribution over the model size can be obtained by adopting the Beta-Binomial (BB) prior \citep{ley2009effect, Clyde2004}.
Compared  to the IB prior, the BB prior yields posterior inferences which are less sensitive on the value of $\theta$.
The BB has been recently recommended also for handling the problem of multiple hypothesis testing 
\citep{Scott2006,Carvalho2009}.
 
The BB prior treats the parameter $\theta$ as a random variable with Beta prior distribution:
$\theta \sim Beta(\alpha, \beta)$.
It is common to set  $\alpha=\beta=1$; under this choice,  the Beta distribution is  \textit{uniform}.
The resulting probability of model $m_i$ which contains $k_i$ covariates is \citep{ley2009effect}:
\begin{equation}
P(m_i)=\frac{k_i!(k-k_i)!}{k+1!} \label{eq:prob-beta-single}
%
\end{equation}
The resulting probability of the model size $W$ to be equal to $k_i$ is:
\begin{equation}
P(W=k_i)= 
\frac{1}{k+1}  \,\, \forall \,\, m_i \label{eq:w-k-i}
\end{equation}
The model size is thus uniformly distributed, as a result of having set a uniform prior on $\theta$.
In the Appendix we show the analytical derivation of formulas (\ref{eq:prob-beta-single})--(\ref{eq:w-k-i}).

Summing up, the IB prior under the choice $\theta=1/2$  implies all models to be equally probable and the model size to be  binomially distributed.
Instead the BB prior under the choice $\alpha=\beta=1$ implies the probability of each model to depend on the number of covariates according to Eqn.(\ref{eq:prob-beta-single}) and the model size $W$ to be uniformly distributed.

In Fig.\ref{bb-ib-prior} we compare the prior distribution on the model size $W$ obtained using the IB and the BB prior  for $k$=6; this is the number of covariates of our case study.

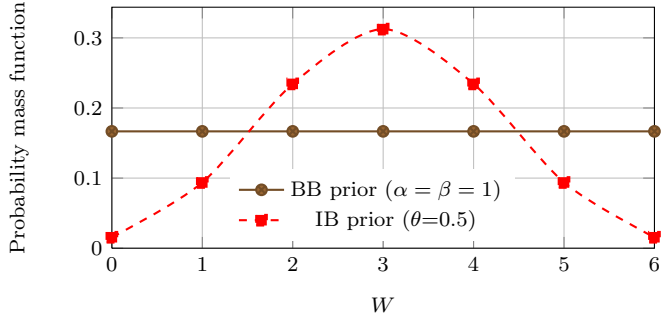
\begin{figure}[!htp]
 \centering
\begin{tikzpicture}[]
\begin{axis}[xlabel=$W$, ylabel=Probability mass function,
legend style={at={(axis cs:1.3,0.01)},anchor=south west,draw=none},
grid=major,  xtick={0,1,2,3,4,5,6},
ymin=0,xmin=0,xmax=6, height= 0.2 * \textheight, width= 0.5 * \textwidth,
every axis/.append style={font=\footnotesize}]
\addplot[smooth, thick,brown!60!black,every mark/.append style={fill=brown!80!black},mark=otimes* ] table [col sep=comma, x=W, y=beta] {beta-bin.csv};
\addplot [smooth,thick, dashed,red,mark=square*]
table [col sep=comma, x=W, y=bernoulli ] {beta-bin.csv};
\legend{BB prior ($\alpha=\beta=1$), IB prior ($\theta$=0.5)}
\end{axis}
\end{tikzpicture}
\caption{Prior distribution on the model size, under the independent Bernoulli and the beta-binomial prior for $k$=6.}\label{bb-ib-prior}
\end{figure}

\subsection{Informative prior}\label{sec:nb-prior}
One can express domain knowledge by differently specifying
the prior probability of inclusion of each covariate.
This requires generalizing the IB prior so that each covariate has its own
prior probability of inclusion.
We denote by $\boldsymbol{\theta}$
the [$k\times1$] vector including the prior probability of inclusion of covariates $X_1,\ldots,X_k$
and by $\theta_j$ the probability of inclusion of the single covariate $X_j$.
The prior probability of model $m_i$ is thus:
\begin{equation}\label{eq:inidBernoulli}
 P(m_i) = \prod_{X_j \in \mathcal{X}_i}\theta_j \prod_{X_j \not \in \mathcal{X}_i} (1-\theta_j)
\end{equation}
where we recall that  $\mathcal{X}_i$ is the set of covariates included in model $m_i$.
We call this prior NB, which stands for \textbf{N}on-identical \textbf{B}ernoulli.
The NB prior generalizes the IB prior, retaining its independence assumption but removing
the constraint of the prior probability of inclusion being equal for all covariates.

\section{Credal Model Averaging (CMA)}\label{sec:cma-algo}
CMA generalizes BMA  by substituting the prior over the models by a set of priors  over the models.
The set of priors is called \textit{credal set} \citep{zaffalon2001a}.
We discuss  two different versions of CMA:  CMA$_{ib}$ and CMA$_{nb}$.
CMA$_{ib}$ \citep{2013:23:isipta} generalizes  BMA induced under the IB prior; CMA$_{nb}$ generalizes BMA induced under the NB prior.

\subsection{CMA$_{ib}$}\label{sec:cma-ib}

We start by presenting CMA$_{ib}$.
The BMA induced under the IB prior requires specifying a single value of $\theta$;
instead CMA$_{ib}$  allows $\theta$ to vary within the interval $[\underline{\theta},\overline{\theta}]$.

The constraints $\underline{\theta}$ \textgreater 0 and $\overline{\theta}$ \textless 1 apply to the credal set of CMA$_{ib}$.
For instance, the IB prior with $\theta$=0 assigns zero prior probability to each model
apart from the \textit{null} model which includes no covariates.
The problem is that  such sharp zero probabilities do not change after having seen the data:
prior and posterior probabilities of the models remain identical.
In other words, such prior prevents learning from data.
In the same way the IB prior with $\theta=1$ prevents learning from data.
The IB priors with $\theta=0$  and $\theta=1$ are thus excluded from the credal set.

CMA$_{ib}$  represents a condition \textit{close} to Walley's prior {near-ignorance} \cite[Chap.5.3.2]{walley1991statistical} if one sets $\underline{\theta} =\epsilon$
and $\overline{\theta} =1-\epsilon$. This is the approach followed in \citep{2013:23:isipta}. 

The inferences of CMA$_{ib}$ return
\textit{intervals} of probability rather than a single probability.
For instance CMA$_{ib}$ computes an \textit{interval} for the posterior probability of each class.
The interval shows the sensitivity of the posterior probability on the prior over the models.
Thus, CMA$_{ib}$ automates sensitivity analysis.

The \textit{lower} posterior probability of  class $c_1$ is computed 
as follows:
\begin{eqnarray}\label{eq:min-prob}
 \underline{P}(c_1|D,\mathbf{x})= \min_{\theta\in[\underline{\theta},\overline{\theta}]}\sum_{m_i \in \cal{M}}P(c_1|D,\mathbf{x},m_i)P(m_i|D)=\nonumber\\
 \min_{\theta\in[\underline{\theta},\overline{\theta}]}\sum_{m_i \in \cal{M}}P(c_1|D,\mathbf{x},m_i) \frac{P(D|m_{i})P(m_{i})}{\sum_{m_j \in \cal{M}}P(D|m_{j})P(m_{j})}= \nonumber\\
\min_{\theta\in[\underline{\theta},\overline{\theta}]}\sum_{m_i \in \cal{M}}P(c_1|D,\mathbf{x},m_i)\frac{P(D|m_{i})\theta^{k_i} (1-\theta)^{k-k_i}}{\sum_{m_j \in \cal{M}}P(D|m_{j})\theta^{k_j} (1-\theta)^{k-k_j}}
\end{eqnarray}
where the  marginal likelihoods $P(D|m_i)$ are computed using the BIC approximation of Eqn.(\ref{eb:bic-lik}).
The upper probability of $c_1$ is obtained by  maximizing rather than minimizing expression (\ref{eq:min-prob}).

Since our problem has only two classes, the upper and lower posterior probability of $c_0$ are
readily obtained as:
\begin{eqnarray*}\label{eq:complement}
 \underline{P}(c_0|D,\mathbf{x})=1-\overline{P}(c_1|D,\mathbf{x})\\
 \overline{P}(c_0|D,\mathbf{x})=1-\underline{P}(c_1|D,\mathbf{x})
\end{eqnarray*}

Another relevant inference is the posterior probability of inclusion of a covariate. For instance,
the \textit{lower} probability of  inclusion of covariate $X_j$ is:
\begin{eqnarray}\label{eq:min-prob-inclusion}
 \underline{P}(\beta_j\neq0)= \min_{\theta\in[\underline{\theta},\overline{\theta}]}\sum_{m_i \in \cal{M}}\rho_{ij}P(m_i|D)=\nonumber\\
 \min_{\theta\in[\underline{\theta},\overline{\theta}]}\sum_{m_i \in \cal{M}}\rho_{ij} \frac{P(D|m_{i})P(m_{i})}{\sum_{m_j \in \cal{M}}P(D|m_{j})P(m_{j})}= \nonumber\\
\min_{\theta\in[\underline{\theta},\overline{\theta}]}\sum_{m_i \in \cal{M}}\rho_{ij}\frac{P(D|m_{i})\theta^{k_i} (1-\theta)^{k-k_i}}{\sum_{m_j \in \cal{M}}P(D|m_{j})\theta^{k_j} (1-\theta)^{k-k_j}}
\end{eqnarray}
where the binary variable $\rho_{ij}$ is 1 if model $m_i$ includes covariate $X_j$ and $0$ otherwise.

All such optimization problems  are solved by the analytical procedures reported in the Appendix.
\subsection{CMA$_{nb}$}\label{sec:cma-nb}
CMA$_{nb}$ generalizes BMA induced under the NB prior.
As described in Sec.~\ref{sec:nb-prior}, the NB prior allows specifying a different
prior probability for each covariate.
CMA$_{nb}$ permits also to specify a different \textit{upper} and \textit{lower}
prior probability of inclusion for each covariate.
We denote by
$\underline{\theta}_j$ and $\overline{\theta}_j$
the upper and lower prior probability
of $X_j$.
Moreover, we denote by $\boldsymbol{\underline{\theta}}$ and $\boldsymbol{\overline{\theta}}$
the vectors collecting the upper and lower probabilities of all covariates.
As in the case of CMA$_{ib}$, the probability of inclusion cannot be exactly  zero or one.
A condition close to prior ignorance can be modeled by setting for all covariates
$\underline{\theta}_j=\epsilon$ and
$\overline{\theta}_j=1-\epsilon$.

Let us denote by $K(\boldsymbol{\theta})$ the credal set which contains the admissible values for $\boldsymbol{\theta}$.
The credal set $K(\boldsymbol{\theta})$ is largely different between
CMA$_{nb}$ and CMA$_{ib}$.
Consider a case with three covariates, in which we want to model a condition of ignorance.
Using $\epsilon=0.05$,  under CMA$_{ib}$ we would set $\underline{\theta}$=0.05 and $\overline{\theta}$=0.95.
The credal set $K(\boldsymbol{\theta})$ of CMA$_{ib}$ would have two extreme points: $\{0.05;0.05;0.05\}$; $\{0.95;0.95;0.95\}$.
The credal set $K(\boldsymbol{\theta})$ of CMA$_{nb}$ would have 
$2^3$=8 extreme points: the two extreme points of CMA$_{ib}$ and 6 further ones,
such as for instance $\{0.05;0.95;0.05\}$, $\{0.95;0.95;0.05\}$ and so on.

The choice $\epsilon=0.05$ is a compromise between the objective of representing prior ignorance while not getting too close to 0 and 1. 
The function $f(\theta)$ which represents how the posterior probability of inclusion varies as a function of $\theta$  is continuous. It 
takes value 0 for $\theta$=0 and value 1 for $\theta$=1. Thus, it usually has
large curvature near $\theta$=0 and $\theta$=1.
Very small values of epsilon  would return large CMA intervals, even if the posterior varies narrowly in most of the interval. 

The upper and lower probabilities are computed by solving an optimization in a \textit{k}-dimensional space.
The lower posterior probability of $c_1$ is:
\begin{eqnarray}\label{eq:min-prob-NB}
\underline{P}(c_1|D,\mathbf{x})= \min_{\boldsymbol{\theta}\in[\boldsymbol{\underline{\theta}},\boldsymbol{\overline{\theta}}]}\sum_{m_i \in \cal{M}}P(c_1|D,\mathbf{x},m_i)P(m_i|D)=\nonumber\\
 \min_{\boldsymbol{\theta}\in[\boldsymbol{\underline{\theta}},\boldsymbol{\overline{\theta}}]}\sum_{m_i \in \cal{M}}P(c_1|D,\mathbf{x},m_i) \frac{P(D|m_{i})P(m_{i})}{\sum_{m_j \in \cal{M}}P(D|m_{j})P(m_{j})}= \nonumber\\
\min_{\boldsymbol{\theta}\in[\boldsymbol{\underline{\theta}},\boldsymbol{\overline{\theta}}]}\sum_{m_i \in \cal{M}}P(c_1|D,\mathbf{x},m_i)\frac{P(D|m_{i})\prod_{X_j \in \mathcal{X}_i}\theta_j \prod_{X_j \not \in \mathcal{X}_i} (1-\theta_j)}{\sum_{m_j \in \cal{M}}P(D|m_{j})\prod_{X_j \in \mathcal{X}_i}\theta_j \prod_{X_j \not \in \mathcal{X}_i} (1-\theta_j)}
\end{eqnarray}
Also for CMA$_{nb}$ the upper and lower probability of $c_0$ are the complement to 1 of the
lower and upper probability of $c_1$.

The lower posterior probability of inclusion of covariate $X_j$ is:
\begin{eqnarray}\label{eq:min-postprob-NB}
\underline{P}(\beta_j \neq 0)=
\min_{\boldsymbol{\theta}\in[\boldsymbol{\underline{\theta}},\boldsymbol{\overline{\theta}}]}\sum_{m_i \in \cal{M}}\rho_{ij}P(m_i|D)=\nonumber\\
 \min_{\boldsymbol{\theta}\in[\boldsymbol{\underline{\theta}},\boldsymbol{\overline{\theta}}]}\sum_{m_i \in \cal{M}}\rho_{ij} \frac{P(D|m_{i})P(m_{i})}{\sum_{m_j \in \cal{M}}P(D|m_{j})P(m_{j})}= \nonumber\\
\min_{\boldsymbol{\theta}\in[\boldsymbol{\underline{\theta}},\boldsymbol{\overline{\theta}}]}\sum_{m_i \in \cal{M}}\rho_{ij}\frac{P(D|m_{i})\prod_{X_j \in \mathcal{X}_i}\theta_j \prod_{X_j \not \in \mathcal{X}_i} (1-\theta_j)}{\sum_{m_j \in \cal{M}}P(D|m_{j})\prod_{j \in \mathcal{X}_i}\theta_j \prod_{X_j \not \in \mathcal{X}_i} (1-\theta_j)}
\end{eqnarray}
where  $\rho_{ij}$ is set to 1 if the covariate $X_j$ is included in the model $m_i$ and to 0 otherwise.

The optimization problems of CMA$_{nb}$  cannot be solved analytically; we thus rely on numerical optimization. In particular 
we adopt a local solver provided by the \textbf{NLopt}
software (\url{http://ab-initio.mit.edu/nlopt}).
We use the \textbf{nloptr} package\footnote{\url{http://cran.r-project.org/web/packages/nloptr/index.html}} as 
interface between \textbf{R} and \textbf{NLopt}.
We compute the gradient of the objective function through the symbolic solver of \textbf{R} and then we  provide it to the solver.

\subsection{Sampling the model space}
CMA has been described so far assuming to exhaustively explore 
the model space.
However, data sets with large number of covariates prevent this approach.
In this case it is necessary to sample the model space. Strategies suitable to sample the model space are discussed for instance in
\citep{Clyde2004,boulle2007compression}. 
The CMA algorithms can easily accommodate a set of sampled models.
Denote the set of sampled models by $\cal{M}'$. The CMA inferences could be performed using
the formulas given in Sections \ref{sec:cma-ib} and \ref{sec:cma-nb}, 
provided that the whole model space 
$\cal{M}$ is substituted by  the sampled model space $\cal{M}'$ when summing over the models. 

\subsection{Taking decisions}
Two criteria are commonly used for classification under imprecise probability:
\textit{interval-dominance} and \textit{maximality} \citep{troffaes2007decision}.

According to interval-dominance, class $c_1$ dominates $c_2$ (given covariates $\mathbf{x}$) if:
\begin{equation}
 \underline{P}(c_1|D,\mathbf{x},\theta) > \overline{P}(c_2|D,\mathbf{x},\theta)
\end{equation}

According to maximality, class $c_1$ dominates $c_2$ iff:
\begin{equation}
 P(c_1|D,\mathbf{x},\theta) > P(c_2|D,\mathbf{x},\theta) \,\, \forall \,\, \theta \in [\underline{\theta},\overline{\theta}]\label{eq:maximality}
\end{equation}

If a class is interval-dominant it also maximal \citep{troffaes2007decision}, but not vice versa.
Thus interval-dominance generally return more cautious classifications (more output classes) than
maximality.
Yet, if the class variable is binary the two criteria are equivalent. This  is proven
by the following lemma.

\begin{lemma}
 If the class variable is binary, maximality implies interval-dominance.
\end{lemma}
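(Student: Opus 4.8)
The plan is to show that if a class is maximal in the binary case, then it is in fact interval-dominant, which combined with the already-stated implication (interval-dominance $\Rightarrow$ maximality) gives the equivalence. So suppose the class variable is binary with values $c_1$ and $c_2$, and suppose $c_1$ is maximal, i.e. by Eqn.~(\ref{eq:maximality}) we have $P(c_1|D,\mathbf{x},\theta) > P(c_2|D,\mathbf{x},\theta)$ for all $\theta \in [\underline{\theta},\overline{\theta}]$. I want to conclude $\underline{P}(c_1|D,\mathbf{x}) > \overline{P}(c_2|D,\mathbf{x})$.

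The key observation is that, with only two classes, $P(c_1|D,\mathbf{x},\theta) + P(c_2|D,\mathbf{x},\theta) = 1$ for every fixed $\theta$, so the inequality $P(c_1|D,\mathbf{x},\theta) > P(c_2|D,\mathbf{x},\theta)$ is equivalent to $P(c_1|D,\mathbf{x},\theta) > 1/2$ (and hence $P(c_2|D,\mathbf{x},\theta) < 1/2$), simultaneously for all admissible $\theta$. Taking the infimum over $\theta$ in the first bound and the supremum over $\theta$ in the second, I get $\underline{P}(c_1|D,\mathbf{x}) = \inf_\theta P(c_1|D,\mathbf{x},\theta) \ge 1/2$ and $\overline{P}(c_2|D,\mathbf{x}) = \sup_\theta P(c_2|D,\mathbf{x},\theta) \le 1/2$, so $\underline{P}(c_1|D,\mathbf{x}) \ge \overline{P}(c_2|D,\mathbf{x})$. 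The remaining work is to upgrade the weak inequality to a strict one, which is exactly the delicate point.

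To get strictness I would invoke compactness and continuity: the credal set $[\underline{\theta},\overline{\theta}]$ is a closed bounded interval (more generally a compact set, as in the $k$-dimensional $[\boldsymbol{\underline\theta},\boldsymbol{\overline\theta}]$ case), and $P(c_1|D,\mathbf{x},\theta)$ is a continuous function of $\theta$ --- it is a ratio of finite sums of terms $P(c_1|D,\mathbf{x},m_i)P(D|m_i)\theta^{k_i}(1-\theta)^{k-k_i}$ over a strictly positive normalizing denominator (positive because $\underline\theta>0$ and $\overline\theta<1$ keep every weight bounded away from $0$). Hence the infimum defining $\underline{P}(c_1|D,\mathbf{x})$ is attained at some $\theta^\star$, and by the maximality hypothesis $\underline{P}(c_1|D,\mathbf{x}) = P(c_1|D,\mathbf{x},\theta^\star) > P(c_2|D,\mathbf{x},\theta^\star) = 1 - P(c_1|D,\mathbf{x},\theta^\star) \ge 1 - \overline{P}(c_2|D,\mathbf{x})$... wait, more directly: $P(c_1|D,\mathbf{x},\theta^\star) > 1/2 \ge \overline{P}(c_2|D,\mathbf{x})$, where the last inequality is the weak bound established above applied to $c_2$. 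Symmetrically, attaining the supremum for $c_2$ at some $\theta^{\star\star}$ gives $\overline{P}(c_2|D,\mathbf{x}) = P(c_2|D,\mathbf{x},\theta^{\star\star}) < 1/2 \le \underline{P}(c_1|D,\mathbf{x})$. Either chain yields the strict inequality $\underline{P}(c_1|D,\mathbf{x}) > \overline{P}(c_2|D,\mathbf{x})$, which is precisely interval-dominance of $c_1$ over $c_2$.

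The main obstacle is purely the strictness argument: a naive ``take inf/sup of a strict inequality'' does not preserve strictness in general, so the proof genuinely needs the observation that attainment of the extrema (via compactness of the credal set and continuity of the posterior in $\theta$) lets one evaluate both $\underline{P}(c_1|D,\mathbf{x})$ and $\overline{P}(c_2|D,\mathbf{x})$ at concrete points where the hypothesis applies, together with the binary constraint pinning both of them on opposite sides of $1/2$. Everything else --- the complementarity $P(c_1)+P(c_2)=1$, monotone behaviour of inf/sup under the weak bound --- is routine. I would also remark that the argument is unaffected by whether one uses the one-dimensional credal set of CMA$_{ib}$ or the $k$-dimensional one of CMA$_{nb}$, since only compactness and continuity are used.
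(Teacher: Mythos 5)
Your proof is correct and takes essentially the same route as the paper's: binary complementarity turns the maximality condition into $P(c_1|D,\mathbf{x},\theta) > 1/2 > P(c_0|D,\mathbf{x},\theta)$ for every admissible $\theta$, and one then passes to $\underline{P}(c_1|D,\mathbf{x})$ and $\overline{P}(c_0|D,\mathbf{x})$. The only difference is that you explicitly justify why the strict inequality survives the infimum/supremum (attainment of the extrema via compactness of the credal set and continuity of the posterior in $\theta$), a step the paper's proof asserts without comment; this is a refinement of the same argument rather than a different approach.
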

\begin{proof}
 For a binary class variable:
 \[
 P(c_1|D,\mathbf{x},\theta) = 1-P(c_0|D,\mathbf{x},\theta)  
                              \]
Plugging this expression in Eqn.(\ref{eq:maximality}), we get:
\[
 P(c_1|D,\mathbf{x},\theta) > 1/2   \,\, \forall \,\, \theta \in [\underline{\theta},\overline{\theta}]
 \]
which implies:
\[
 P(c_0|D,\mathbf{x},\theta) < 1/2   \,\, \forall \,\, \theta \in [\underline{\theta},\overline{\theta}]
 \]
Thus, 
\begin{eqnarray}
  \underline{P}(c_1|D,\mathbf{x},\theta) > 1/2\\
  \overline{P}(c_0|D,\mathbf{x},\theta) < 1/2
  \end{eqnarray}
so that $\underline{P}(c_1|D,\mathbf{x},\theta) > \overline{P}(c_0|D,\mathbf{x},\theta)$.
\qed
\end{proof}
Thus when dealing with a binary class (like in our case study), maximality and interval-dominance are equivalent.
For instance,  CMA returns $c_1$ as a prediction  if both its upper \textit{and} lower posterior probability are \textit{greater} than 1/2.
In this case the instance is \textit{safe}: the most probable class does not vary with the prior over the models.

Instead, CMA returns \textit{the set of classes} $\{c_0,c_1\}$ if the posterior probability intervals of the two classes overlap.
This happens if both classes have upper probability \textit{greater} than 1/2 and lower 
probability \textit{smaller} than 1/2.
In this case the instance is \textit{prior-dependent}: one class or the other is more probable depending on the prior over the models.

A final consideration regards the case in which the prior used to induce BMA is included in the credal set of CMA.
In this case the posterior probability computed by BMA is included within the posterior interval computed by CMA.
When CMA returns a single class, BMA and CMA predictions match.
\section{Case study}\label{sec:case-study}

Data regarding the distribution of Alpine marmot (\emph{marmota marmota}) burrows were collected by AM and other collaborators in the summer of 2010 and 2011, in an Alpine valley in  Northern Italy.
To develop the species distribution model we divide the explored area into cells of 10 x 10m, obtaining a data set of 9429 cells.
The fraction of presence (\textit{prevalence}) is 436/9429= 0.046.


Considering that the Alpine marmot prefers south-facing slopes ranging between 1600 and 3000 m a.s.l. \citep{Lopez2010},
we introduce \textit{altitude} and \textit{slope} as covariates.
A third relevant piece of information is the aspect, namely the angle between the maximum gradient of the terrain and the North.
We represent the aspect by  introducing two covariates (\emph{northitude} and \emph{eastitude}),
corresponding respectively to the cosine and the sine of the aspect.
Northitude  and the eastitude are proxies for the amount and the temporal distribution of sunlight received during the day.
The fifth covariate is the \textit{curvature}, which measures the upward convexity (or concavity) of the terrain.
The sixth and last covariate is the \textit{soil cover}, namely the proportion of terrain not covered by vegetation. We obtain the soil cover from a digital map of the land use\footnote{The database, known as DUSAF2.0, was retrieved at: \url{http://www.cartografia.regione.lombardia.it/geoportale}.}.

The Alpine marmot is a mobile species, which uses a huge territory for its activities.
Therefore the decision of establishing a burrow depends also on the conditions of the surrounding cells.
For this reason we average the value of each covariate over a circular \emph{buffer area} of 2 ha around the cell being analyzed.

\subsection{Interviewing experts}
We asked three experts for the prior probability of inclusion of each covariate; the results are reported in  Table \ref{tab:experts}.
The pool of experts is composed by  two scientists who published several papers on the species
(Dr. Bernat Claramunt L\'{o}pez and Prof. Walter Arnold) and a master student (Mrs. Viviana Brambilla) who participated to the collection of marmot data without analyzing them.
The prior beliefs of the experts are shown in Table \ref{tab:experts}.
The labels of first, second and third expert are randomly assigned
to hide whose are the prior beliefs.

The first expert provided us with a single probability value for each covariate, while the two other experts provided us with interval probabilities.
The third expert provides intervals strongly skewed either towards inclusion  or exclusion.

\begin{table}[htbp]
\centering
 \begin{tabular}{ccccccccc}\toprule
   \textit{}	& 	\multicolumn{3}{c}{\textbf{Experts}}	&	& & & \multicolumn{2}{c}{\textbf{Priors}}\\
& First Expert	&	Second Expert	&	Third Expert &	& & &CMA & BMA\\	
&&	&	&	& & &(convex hull) & (central point)\\	\midrule
\textit{altitude}	&	0.95	&	[0.80-0.95]	&	[0.90-0.95]	& & &	&[0.80-0.95]	 & 0.87\\
\textit{slope }		&	0.50	&	[0.70-0.95]	&	[0.05-0.10]	& &	& &[0.05-0.95] 	& 0.50	\\
\textit{curvature} 	&	0.40	&	[0.40-0.60]	&	[0.05-0.10]	 & && &	[0.05-0.60]	 & 0.27\\
\textit{northitude} 	&	0.60	&	[0.60-0.80]	&	[0.90-0.95]	& &&	 &[0.60-0.95]	 & 0.77 \\
\textit{eastitude} 	&	0.60	&	[0.60-0.90]	&	[0.05-0.10]	& & &&	[0.05-0.90]	 & 0.50\\
\textit{soil cover}	&	0.95	&	[0.70-0.95]	&	[0.90-0.95]	& & &&	[0.70-0.95]	& 0.82\\
\bottomrule
 \end{tabular}
\caption{Probability of inclusion according to the three experts;
imprecise model of prior knowledge (convex hull);
precise model of prior knowledge (central point of the convex hull).}\label{tab:experts}
\end{table}

We aggregate in two different ways the expert beliefs.
Firstly we take their \textit{convex hull} in the spirit of imprecise probability.
We will later use such convex hulls to represent (imprecise) prior knowledge within
CMA$_{nb}$, which allows for different specification of the lower and upper prior probability of inclusion
of each covariate.
Secondly we take the \textit{central point } of the convex hull in the more traditional spirit
of representing prior knowledge by a single prior distribution.
We will later use such information to design a NB prior for  BMA.

The difference among such two approaches can be readily appreciated.
Consider  slope, for which the experts have strongly different opinions.
The convex hull of its prior probability of inclusion is a \textit{wide} interval (0.05--0.95), which
appropriately represents a condition of substantial \textit{ignorance}.	
The central point approach yields prior probability of inclusion 0.5, which represents 
prior \textit{indifference} about the inclusion/exclusion of  the covariate.
As pointed out by \cite[Chap.5.5]{walley1991statistical},  a model of prior \textit{indifference} is inappropriate to model 
the substantial uncertainty which instead characterizes a state of \textit{ignorance}.
\section{Results}\label{sec:results}
We induce  BMA under three priors: IB ($\theta$=0.5, non-informative);
BB ($\alpha=\beta=1$, non-informative); NB (informative).
We call these three models BMA$_{ib}$, BMA$_{bb}$ and BMA$_{nb}$.
For BMA$_{nb}$ we set the prior probability of inclusion of each covariate
equal to the central point of the convex hull of the expert beliefs reported in
Table \ref{tab:experts}. Thus BMA$_{nb}$ embodies domain knowledge.

We also consider three variants of CMA.
The first is CMA$_{ib}$ with $\overline{\theta}=0.95$ and $\underline{\theta}=0.05$.
This is the model originally proposed in \citet{2013:23:isipta} and represents
a condition close to prior-ignorance, though  under the restrictive assumption
of the prior probability of all covariates being equal.
The second is CMA$_{nb}$
with the prior-ignorant configuration $\underline{\theta_j}=0.05$ and $\overline{\theta_j}=0.95$ for each covariate.
As already discussed,
the credal set of CMA$_{nb}$ contains a much wider variety of priors compared to CMA$_{ib}$;
thus we expect CMA$_{nb}$ to be much more imprecise than CMA$_{ib}$.

The third model is CMA$_{exp}$.
This is a variant of CMA$_{nb}$ which embodies partial prior knowledge: upper and lower probability of inclusion of each covariate correspond
to the upper and lower bound of the convex hull of the expert beliefs (Table~\ref{tab:experts}). CMA$_{exp}$ consider narrower prior interval of inclusion for the covariates than CMA$_{nb}$ and
thus it should be more determinate than CMA$_{nb}$.
This application shows how  CMA can be easily tuned to represent prior ignorance or prior knowledge.

\subsection{Posterior probability of inclusion of covariates}

We induce the three BMAs and CMAs using the \textit{whole} data set (9429 instances).
Table \ref{tab:postprobinc} reports the posterior probability of inclusion of each covariate under the different models.
Such posterior is a point estimate for the BMAs and an interval estimate for the CMAs.
We recall that the beta-binomial prior is \textit{not} included in the credal
set of the CMAs: for this reason its estimate can lie outside of the CMA intervals.

\begin{table}[htbp]
\centering
\begin{tabular}{ccccccc}\toprule
Covariate	& \multicolumn{3}{c}{\textbf{BMA}}	& \multicolumn{3}{c}{\textbf{CMA}}\\
     	&BMA$_{ib}$	&BMA$_{nb}$	& BMA$_{bb}$ &CMA$_{ib}$	&CMA$_{nb}$	&CMA$_{exp}$\\ \midrule
altitude	&1.00		&1.00		&1.00		&[1.00 - 1.00]	&[1.00 - 1.00]	&[1.00 - 1.00]\\
slope    	&1.00		&1.00		&1.00		&[1.00 - 1.00]	&[1.00 - 1.00]	&[1.00 - 1.00]\\
curvature	&0.02		&0.02		&0.01		&[0.00 - 0.27]	&[0.00 - 0.39]	&[0.00 - 0.03]\\
northitude 	&1.00		&1.00		&1.00		&[1.00 - 1.00]	&[1.00 - 1.00]	&[1.00 - 1.00]\\
eastitude	&1.00		&1.00		&1.00		&[1.00 - 1.00]	&[1.00 - 1.00]	&[1.00 - 1.00]\\
soil cover	&0.97		&0.99		&0.94		&[0.66 - 1.00]	&[0.55 - 1.00]	&[0.99 - 1.00]\\ \bottomrule
\end{tabular}
\caption{Posterior probability of inclusion of each covariates estimated by  different models.}
\label{tab:postprobinc}
\end{table}

The most important variables are altitude, slope, eastitude and northitude, whose posterior probability of inclusion is estimated as 1 by all the considered model.
In particular the posterior probability of inclusion of such covariates
is not sensitive on the prior over the models, also thanks to the huge data set.
Remarkably in this case  the  posterior intervals of CMA
collapse into a single point, lower and upper posterior probability of such covariates being both one.

The  results for curvature are less unanimous.
The BMAs recognize it  as irrelevant, estimating a posterior probability not larger than 0.02.
Yet, the two CMAs induced under prior-ignorance (CMA$_{ib}$ and CMA$_{nb}$) achieve a   much less certain conclusions,  estimating the upper posterior probability of inclusion as 0.3 or 0.4.
This hardly allows to safely discard such covariate.
Interestingly, CMA$_{exp}$ achieves a much sharper conclusion, assigning to the curvature an upper posterior probability of inclusion of only 0.03, in line with the Bayesian models.
Thus CMA$_{exp}$ achieves (on this large data set) conclusions which are as sharp as those of the Bayesian models, but much safer as it
does not commit to a single prior.

The soil cover is recognized as relevant by the BMAs, its posterior probability
of inclusion ranging between 0.94 and 0.99 depending on the prior over the models.
Yet, according to CMA$_{ib}$ and CMA$_{nb}$ its \textit{lower} posterior probability of inclusion does not exceed 0.7. Also in this case CMA$_{exp}$ achieves a much sharper conclusion, assigning to soil cover a posterior probability comprised between 0.99 and 1, further showing the beneficial effect of prior knowledge.

The results presented so far are obtained using the entire dataset for training the models.
It is however interesting repeating the analysis with smaller training sets,
in which the choice of the prior over the models is likely to have a greater effect.
We thus down-sampled the data set, creating data sets of size comprised between
$n$=30 and $n$=6000.
The training sets are \textit{stratified}: they contain the same proportion of presence of the original data set.
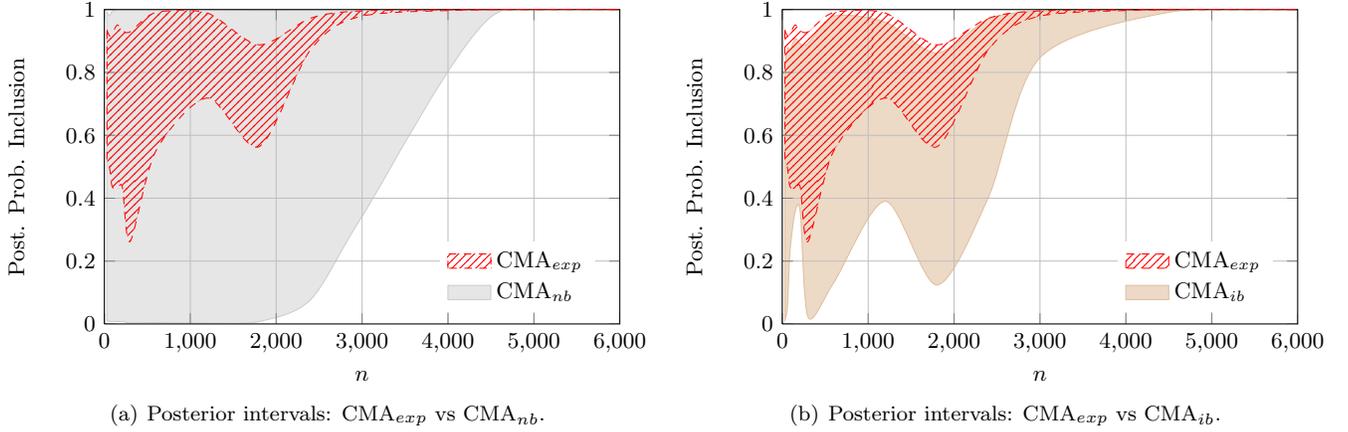
\begin{figure}[!ht]
\subfigure[Posterior intervals: CMA$_{exp}$ vs CMA$_{nb}$.]
{
\begin{tikzpicture}[transform shape,scale=0.95]

  \begin{axis}[xlabel=$n$, ylabel=Post. Prob. Inclusion,
  legend style={draw=none, at={(0.65,0.1)},anchor=west},  grid=major,
  xmin=0,xmax=6000, height= 0.25 * \textheight, width= 0.5 * \textwidth,
  every axis/.append style={font=\small},ymax=1,ymin=0,axis on top]

  \addplot[draw=none,smooth,forget plot,stack plots=y] 
  table [col sep=comma, x=ntrain, y=Ignorant_min] {altitude_postprob.csv} \closedcycle;

  \addplot[color = gray!40 ,fill=gray!20,smooth,area legend,stack plots=y] 
  table [col sep=comma, x=ntrain, y expr={\thisrow{Ignorant_max}-\thisrow{Ignorant_min}}] {altitude_postprob.csv} \closedcycle;
  \legend{CMA$_{nb}$}

  \end{axis}
  
  \begin{axis}[xlabel=$n$, ylabel=Post. Prob. Inclusion,
  legend style={draw=none, at={(0.65,0.2)},anchor=west},  grid=major,
  xmin=0,xmax=6000, height= 0.25 * \textheight, width= 0.5 * \textwidth,
  every axis/.append style={font=\small},ymax=1,ymin=0,axis on top,hide axis]

  \addplot[draw=none,smooth,forget plot,stack plots=y]
  table [col sep=comma, x=ntrain, y=Expert_min] {altitude_postprob.csv} \closedcycle;

  \addplot[dashed,color=red,pattern=north east lines,pattern color = red,smooth, area legend,stack plots=y]
  table [col sep=comma, x=ntrain, y expr ={\thisrow{Expert_max}-\thisrow{Expert_min}}] {altitude_postprob.csv} \closedcycle;
  
  \legend{ CMA$_{exp}$}

  \end{axis}

  \end{tikzpicture}
}
\subfigure[Posterior intervals: CMA$_{exp}$ vs CMA$_{ib}$.]{
  \begin{tikzpicture}[transform shape,scale=0.95]
  
    \begin{axis}[xlabel=$n$, ylabel=Post. Prob. Inclusion,
  legend style={draw=none, at={(0.65,0.1)},anchor=west},  grid=major,
  xmin=0,xmax=6000, height= 0.25 * \textheight, width= 0.5 * \textwidth,
  every axis/.append style={font=\small},ymax=1,ymin=0,axis on top]

  \addplot[draw=none,smooth,forget plot,stack plots=y] 
  table [col sep=comma, x=ntrain, y=Fixed_min] {altitude_postprob.csv} \closedcycle;

  \addplot[color = brown!50 ,fill=brown!30,smooth,area legend,stack plots=y] 
  table [col sep=comma, x=ntrain, y expr={\thisrow{Fixed_max}-\thisrow{Fixed_min}}] {altitude_postprob.csv} \closedcycle;
  
 \legend{CMA$_{ib}$}
  
  \end{axis}
  
  \begin{axis}[xlabel=$n$, ylabel=Post. Prob. Inclusion,
  legend style={draw=none, at={(0.65,0.2)},anchor=west},  grid=major,
  xmin=0,xmax=6000, height= 0.25 * \textheight, width= 0.5 * \textwidth,
  every axis/.append style={font=\small},ymax=1,ymin=0,axis on top,hide axis]

  \addplot[draw=none,smooth,forget plot,stack plots=y]
  table [col sep=comma, x=ntrain, y=Expert_min] {altitude_postprob.csv} \closedcycle;

  \addplot[dashed,color=red,pattern=north east lines,pattern color = red,smooth, area legend,stack plots=y]
  table [col sep=comma, x=ntrain, y expr ={\thisrow{Expert_max}-\thisrow{Expert_min}}] {altitude_postprob.csv} \closedcycle;
  
  \legend{ CMA$_{exp}$}

  \end{axis}
  
  \end{tikzpicture}
}
%
%
%
%
%
\caption{Upper and lower posterior probability of inclusion of altitude
for different CMA and BMA models.}
\label{fig:PIP}
\end{figure}

In Fig.\ref{fig:PIP}(a) and \ref{fig:PIP}(b) we show as an example how the upper and lower posterior probability of inclusion
of the altitude covariate varies with $n$.
We show the upper and lower probability of inclusion computed by different CMAs.
The gap between upper and lower probability of inclusion narrows down as the sample size increases, eventually converging towards a punctual probability.
The gap between upper and lower probability computed by
CMA$_{exp}$ is generally narrower than those of
CMA$_{nb}$   and CMA$_{ib}$.
This is the beneficial effect of expert knowledge.
The curve are non-monotonic, probably because we performed just once the whole procedure.
Averaging over many repetitions would yield smoother curves.

\subsection{Comparing  CMA and BMA predictions}\label{sec:expe-protocol}

We consider training sets with dimension comprised between
30 and 1500. Beyond this size no significant changes are detected.

For each sample size we repeat 30 times the procedure of
i) building  a training set by randomly down-sampling the original data set;
ii) training the different BMAs and CMAs;
iii) assessing the model  predictions on the test set, constituted by 1000 instances not included in the training set.
Training and test sets are \textit{stratified}: they have the same \textit{prevalence} (fraction of presence data) of the original data set.

The most common indicator of performance for classifiers is the \textit{accuracy}, namely the proportion of instances correctly classified using 0.5 as probability threshold.
Accuracy ranges between 0.93 and 0.97 depending on the sample size. The  accuracies of the different BMAs are pretty close.
However, a skewed distribution of the classes can misleadingly inflate the value of accuracy. If for instance a species is absent from 90\% of the sites, a trivial classifier
which always returning absence would achieve 90\% accuracy without providing any information.
The AUC (area under the receiver-operating curve) \citep{fawcett2006introduction} is  more robust than accuracy, being insensitive to class unbalance.
The AUC of a random guesser is 0.5; the AUC of a perfect predictor is 1.

\begin{figure}[!ht]
 \centering
\begin{subfigure} 
 \centering
\begin{tikzpicture}[]
\begin{axis}[xlabel=$n$, ylabel=AUC, legend style={draw=none, legend pos=south east}, grid=major, xmin=0,xmax=600, height= 0.2 * \textheight, width= 0.45 * \textwidth,
every axis/.append style={font=\footnotesize}]
\addplot[smooth,mark=*] table [col sep=comma, x=sample_size, y=Uniform_AUC] {accuracy_utility.csv};
\addplot [smooth, thick, dashed,red,every mark/.append style={fill=red!80!black},mark=square*]
table [col sep=comma, x=sample_size, y=BB_AUC] {accuracy_utility.csv};
\addplot [smooth,thick,mark=triangle]
table [col sep=comma, x=sample_size, y=Central_AUC] {accuracy_utility.csv};
\legend{BMA$_{ib}$,BMA$_{bb}$,BMA$_{nb}$}
\end{axis}
\end{tikzpicture}
\end{subfigure}
 \begin{subfigure}
 \centering
%
\begin{tikzpicture}[]
\begin{axis}[xlabel=$n$, ylabel=Recall, legend style={draw=none, legend pos=north east}, grid=major, xmin=0,xmax=600, height= 0.2 * \textheight, width= 0.45 * \textwidth,
every axis/.append style={font=\footnotesize},ytick={0.05,0.10,0.15},yticklabels={0.05,0.10,0.15}] 
\addplot[smooth,mark=*] table [col sep=comma, x=sample_size, y=Uniform_sensitivity] {accuracy_utility.csv};
\addplot [smooth, thick, dashed,red,every mark/.append style={fill=red!80!black},mark=square*]
table [col sep=comma, x=sample_size, y=BB_sensitivity] {accuracy_utility.csv};
\addplot [smooth,thick,mark=triangle]
table [col sep=comma, x=sample_size, y=Central_sensitivity] {accuracy_utility.csv};
\legend{BMA$_{ib}$,BMA$_{bb}$,BMA$_{nb}$}
\end{axis}
\end{tikzpicture}
\end{subfigure}

\caption{AUC and recall of BMA under different priors.
The plots are truncated at
$n$=600 since after this value no further change is observed.  Each point represents the average over 30 experiments.}\label{fig:acc-bma}
\end{figure}
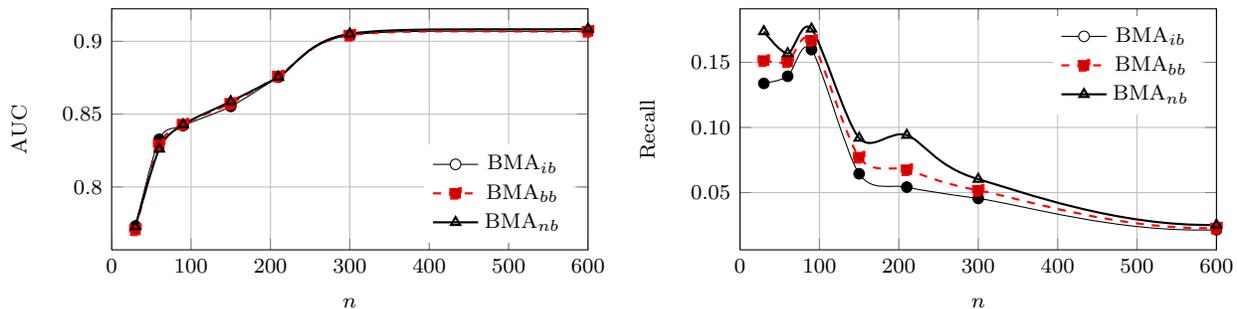

Figure \ref{fig:acc-bma} shows the AUC of BMA using different priors over the models.
The plots are truncated at $n$=600, since no further significant changes are observed going beyond
this amount of data.
Larger training sets allow to better learn the model and
result in larger AUC values. The
impact of the prior on AUC is quite thin.
Overall, BMA performs well, its AUC being generally superior to 0.8.

In this case study, the probability of presence is much lower than the probability of absence.
Another meaningful indicator of performance is thus the recall (percentage of existing burrows whose presence is correctly predicted). 
The recall of BMA$_{nb}$ is consistently higher than that of the other BMAs (Figure \ref{fig:acc-bma}b), thus benefiting from expert knowledge.
Indicators such as precision and recall are used when the problem is cost-sensitive.

\begin{figure}[!htbp]
 \centering
 \includegraphics[width=6cm]{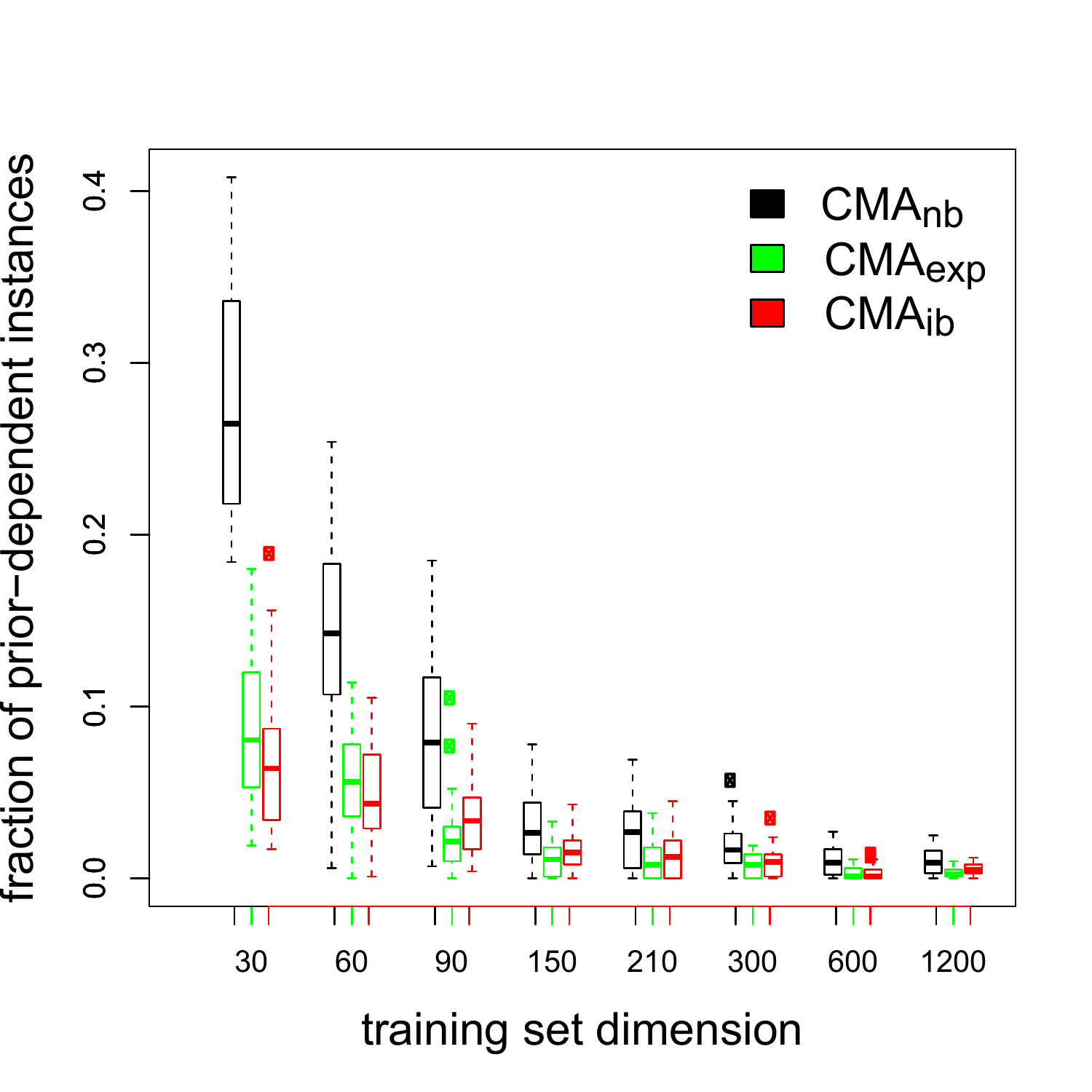}
\caption{Fraction of prior-dependent instances for the different CMA variants.
For each CMA variant and each training set dimension, a box and whisker plot is reported.
The limit of the box represent the interquartile range, while the thick line inside the box is the median.
The limits of the whisker extend out of the box to the most extreme data point which is no more than 1.5 times the interquartile range from the box.
Points that lay out of the whiskers are shown as circles.
\label{fig:marm-indeterminate}}
\end{figure}

We now analyze the CMA results.
We call \textit{indeterminate} classifications the cases in which CMA suspends the judgment
returning both classes.
The  percentage of indeterminate classifications (\textit{indeterminacy}) of the different
CMAs is shown in Fig.\ref{fig:marm-indeterminate}.
The  indeterminacy consistently decreases with the sample size.
This is the well-known behavior of credal classifiers, which become more determinate
as more data are available.
CMA$_{nb}$ is the most indeterminate algorithm;
CMA$_{ib}$ is the least indeterminate algorithm.
The reason of this behavior lies in the different definition of the credal sets:
while both algorithms aim at representing a condition close to prior-ignorance,
the credal set of CMA$_{nb}$ contains a much wider set of priors and results in higher imprecision.
Interestingly, CMA$_{exp}$
is \textit{much} less indeterminate than CMA$_{nb}$ thanks to prior knowledge.

We recall that  any CMA  algorithm divides the instances into two groups: the \textit{safe} and the \textit{prior-dependent} ones. CMA returns a single class on the safe instances and
both classes on the prior-dependent ones.
We therefore separately assess the accuracy  of BMA on the safe and on the prior-dependent instances. This analysis is more meaningful when the prior used to induce
BMA is included in the credal set of CMA. We thus consider the following pairs:
BMA$_{ib}$ vs CMA$_{ib}$; BMA$_{ib}$ vs CMA$_{nb}$; BMA$_{ib}$ vs CMA$_{exp}$.

\begin{figure}[!htb]
 \centering
\subfigure[BMA$_{ib}$ vs CMA$_{ib}$.]{\includegraphics[width=6cm]{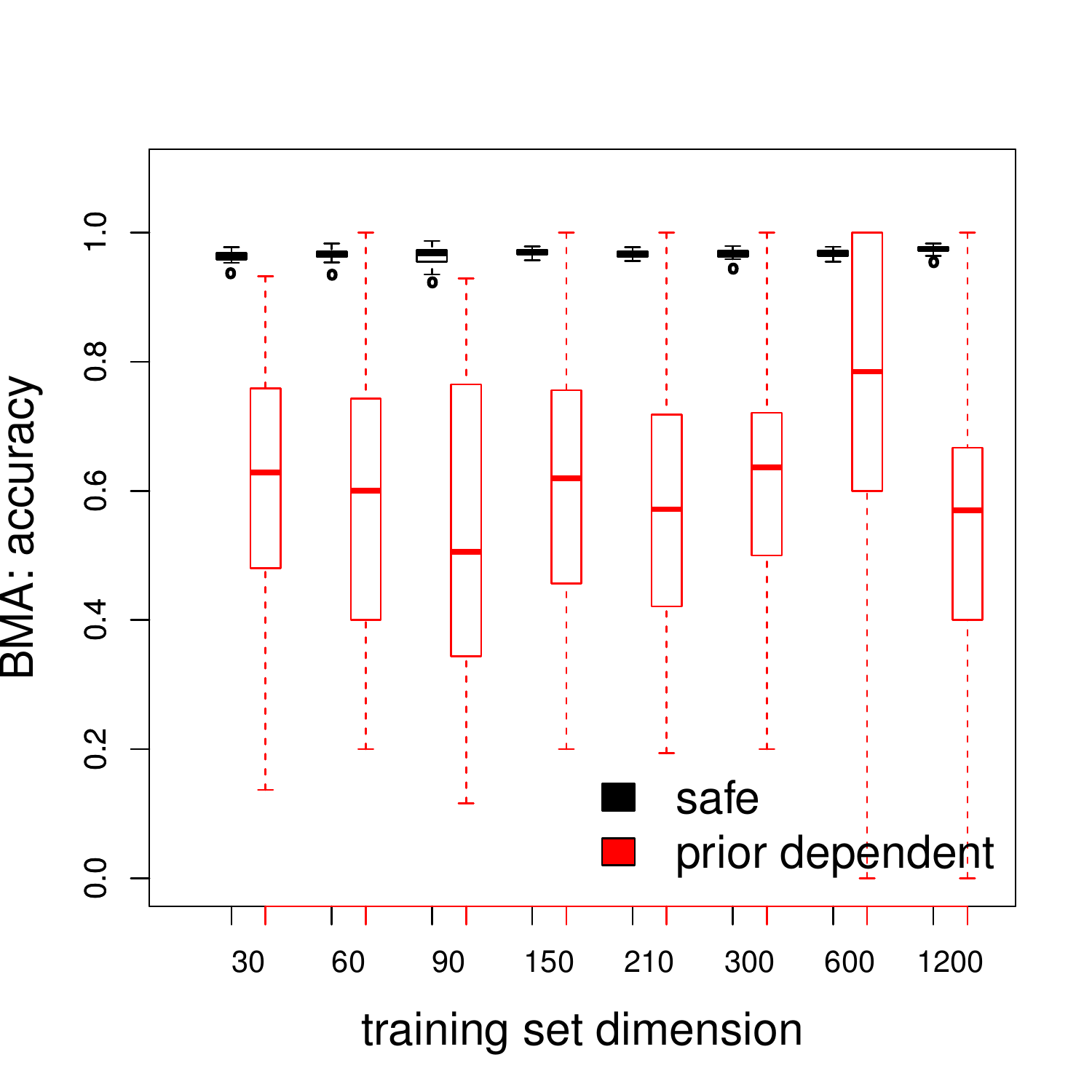}}
\subfigure[BMA$_{nb}$ vs CMA$_{nb}$.]{\includegraphics[width=6cm]{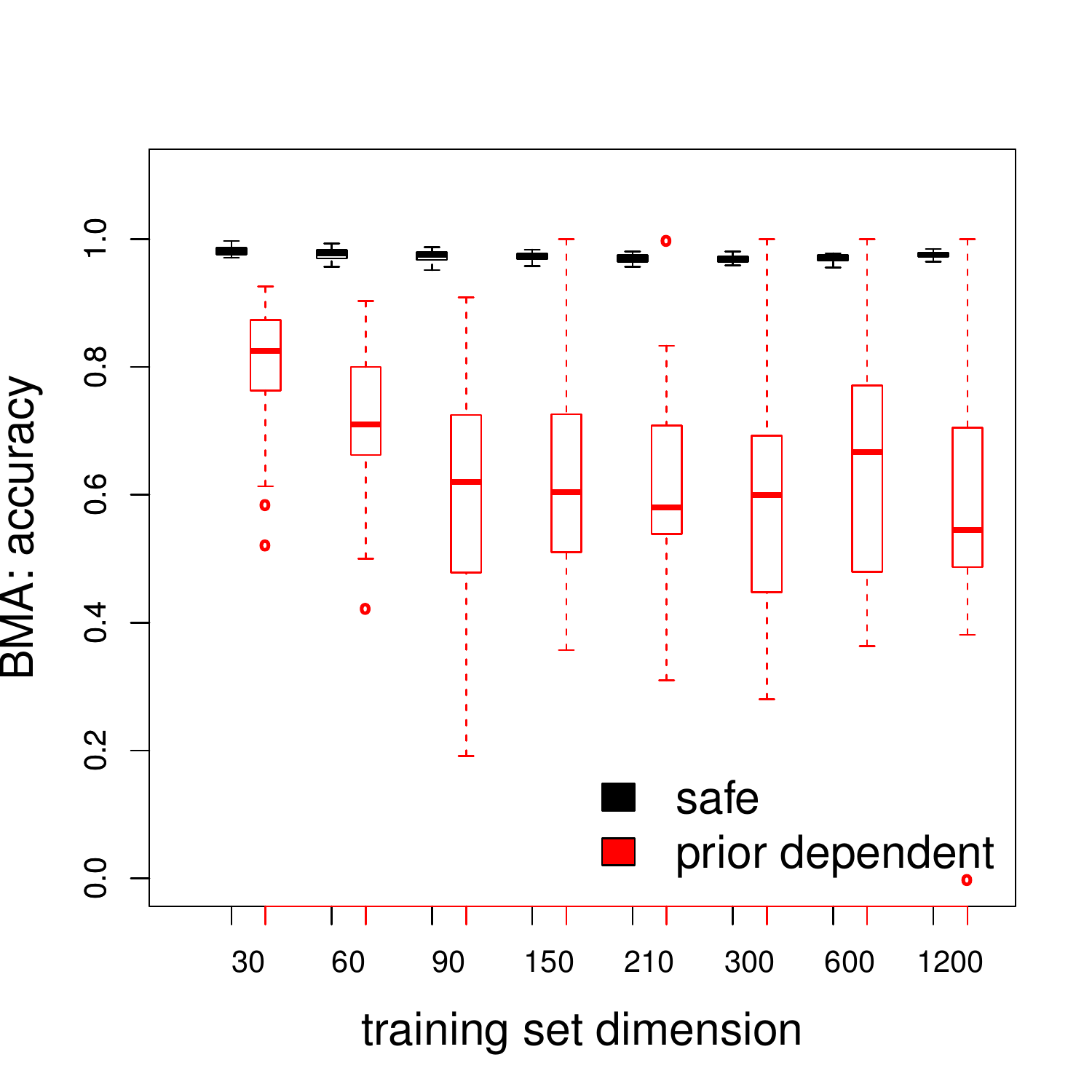}}
\subfigure[BMA$_{nb}$ vs CMA$_{exp}$.]{\includegraphics[width=6cm]{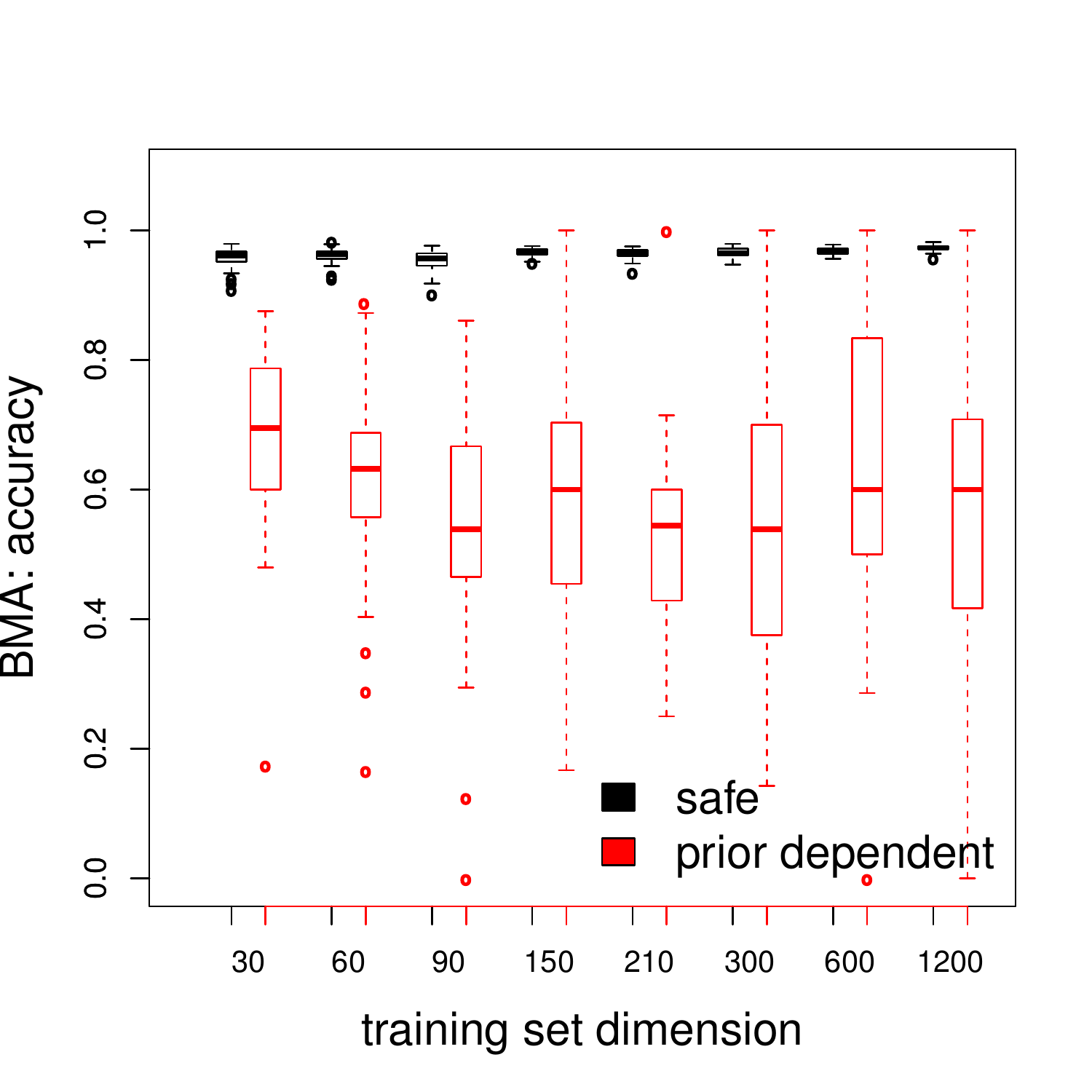}}
\caption{The accuracy of BMA sharply drops on the prior-dependent instances recognized by CMA.
Each boxplot refers to 30 experiments. The boxplots are computed as described in the caption of Figure \ref{fig:marm-indeterminate}.
The corresponding fraction of indeterminate predictions is shown in Figure \ref{fig:marm-indeterminate}.
\label{fig:acc-free}}
\end{figure}

Figure \ref{fig:acc-free} (a) compares the accuracy
of BMA$_{ib}$ on the instances recognized as safe and prior-dependent by CMA$_{ib}$.
On the prior-dependent instances  the accuracy of BMA$_{ib}$ severely drops, getting almost close to random guessing. On a data set with two classes, a random  guesser achieves accuracy 0.5; the average accuracy of BMA$_{ib}$ on the prior-dependent instances is 0.6. On the safe instances, the accuracy of BMA is  above 90\%.
CMA$_{ib}$ thus uncovers a small yet non-negligible set of instances (between 2\% and 8\%) over which BMA$_{ib}$ performs poorly because of prior-dependence.
The phenomenon is already known in the literature of the credal classification \citep{corani2008learning,corani2008credal}.
It has been moreover observed \citep{corani2008credal,2013:23:isipta} that such
doubtful instances are hardly identifiable by looking
at the BMA posterior probabilities.
Detecting prior-dependence using BMA
would require cross-checking the predictions of many BMAs, each induced with a different prior over the models.
This is quite unpractical and is not usually done.


In Figure \ref{fig:acc-free}(b) we compare the accuracy of BMA$_{nb}$ on the instances recognized as safe and as prior-dependent by CMA$_{nb}$.  The prior of BMA$_{ib}$ is contained in the credal set of
CMA$_{nb}$.
The results is qualitatively similar to the previous one, with a sharp drop of accuracy of 
BMA$_{ib}$ on the instances recognized as prior-dependent by CMA$_{ib}$.
Yet, the accuracy of BMA on the prior-independent instances is higher (about 70\%) compared
to the previous case, since  CMA$_{nb}$ is \textit{much} more indeterminate than CMA$_{ib}$.

Eventually, we compare the accuracy of BMA$_{nb}$ on the instances recognized as safe and as prior-dependent by CMA$_{exp}$. Note that the prior of BMA$_{nb}$  is included in the credal set of CMA$_{exp}$.
On average, the accuracy of BMA$_{nb}$ on the prior-dependent instances is about 60\%. The situation is quite similar to the comparison of BMA$_{ib}$ and CMA$_{ib}$.

\subsection{Utility measures}
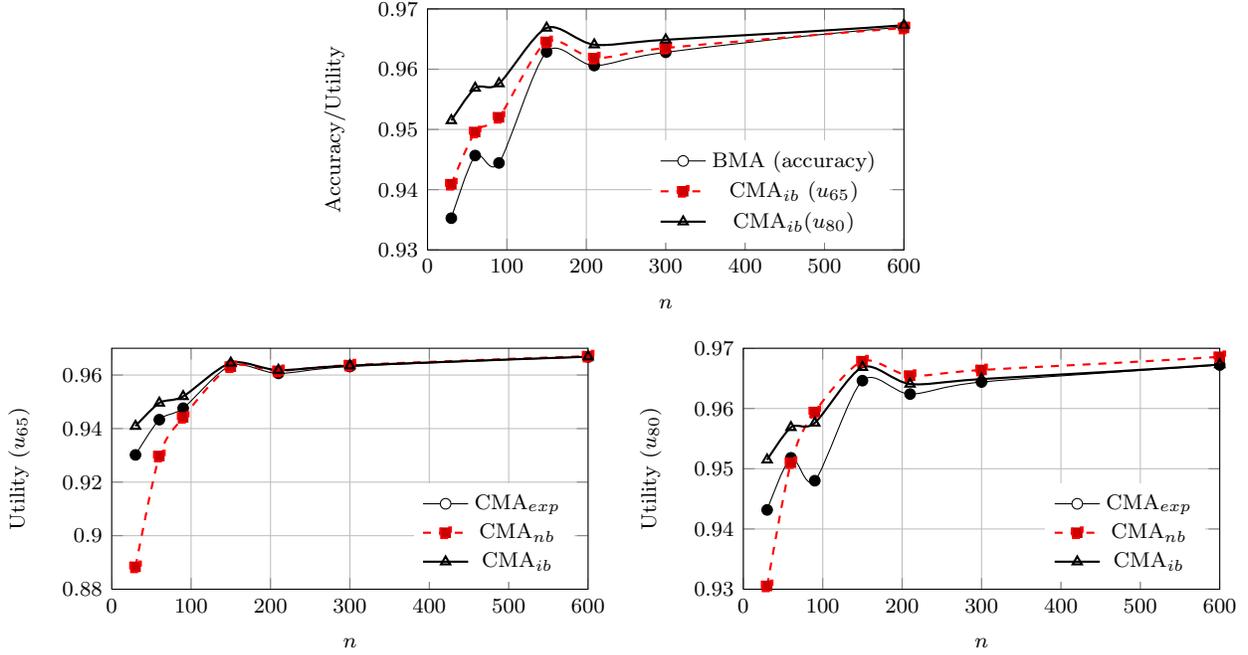
\begin{figure}[!ht]
 \centering
\begin{subfigure}
 \centering
\begin{tikzpicture}[]
\begin{axis}[xlabel=$n$, ylabel=Accuracy/Utility, legend style={draw=none, legend pos=south east}, grid=major, xmin=0,xmax=600, every axis/.append style={font=\footnotesize}, height= 0.2 * \textheight, width= 0.45 * \textwidth,ymin=0.93,ymax=0.97]
\addplot[smooth,mark=*] table [col sep=comma, x=sample_size, y=Uniform_accuracy] {accuracy_utility.csv};
\addplot [smooth, thick, dashed,red,every mark/.append style={fill=red!80!black},mark=square*]
table [col sep=comma, x=sample_size, y=Fixed_util65] {accuracy_utility.csv};
\addplot [smooth,thick,mark=triangle]
table [col sep=comma, x=sample_size, y=Fixed_util80] {accuracy_utility.csv};
\legend{BMA (accuracy), CMA$_{ib}$ ($u_{65}$), CMA$_{ib}$($u_{80}$)}
\end{axis}
\end{tikzpicture}
\end{subfigure}

\begin{subfigure}
 \centering
\begin{tikzpicture}[]
\begin{axis}[xlabel=$n$, ylabel=Utility ($u_{65}$), legend style={draw=none, legend pos=south east}, grid=major, xmin=0,xmax=600,
every axis/.append style={font=\footnotesize}, height= 0.2 * \textheight, width= 0.45 * \textwidth,ymin=0.88,ymax=0.97]
\addplot[smooth,mark=*] table [col sep=comma, x=sample_size, y=Expert_util65] {accuracy_utility.csv};
\addplot [smooth, thick, dashed,red,every mark/.append style={fill=red!80!black},mark=square*]
table [col sep=comma, x=sample_size, y=Ignorance_util65] {accuracy_utility.csv};
\addplot [smooth,thick,mark=triangle]
table [col sep=comma, x=sample_size, y=Fixed_util65] {accuracy_utility.csv};
\legend{CMA$_{exp}$, CMA$_{nb}$, CMA$_{ib}$}
\end{axis}
\end{tikzpicture}
\end{subfigure}
\begin{subfigure}
 \centering
\begin{tikzpicture}[]
\begin{axis}[xlabel=$n$, ylabel=Utility ($u_{80}$), legend style={draw=none, legend pos=south east}, grid=major, xmin=0,xmax=600,
every axis/.append style={font=\footnotesize}, height= 0.2 * \textheight, width= 0.45 * \textwidth,ymin=0.93,ymax=0.97]
\addplot[smooth,mark=*] table [col sep=comma, x=sample_size, y=Expert_util80] {accuracy_utility.csv};
\addplot [smooth, thick, dashed,red,every mark/.append style={fill=red!80!black},mark=square*]
table [col sep=comma, x=sample_size, y=Ignorance_util80] {accuracy_utility.csv};
\addplot [smooth,thick,mark=triangle]
table [col sep=comma, x=sample_size, y=Fixed_util80] {accuracy_utility.csv};
\legend{CMA$_{exp}$, CMA$_{nb}$, CMA$_{ib}$}
\end{axis}
\end{tikzpicture}
\end{subfigure}
\caption{Accuracies of BMA$_{ib}$ versus utility of CMA$_{ib}$. The plot is truncated at
$n$=600 since after this value no further change of accuracies is observed.}
\label{fig:util}
\end{figure}

To further compare the classifiers we adopt
the utility measures introduced in \cite{Zaffalon20121282}, which we briefly describe in the following.
The starting point is the \emph{discounted accuracy}, which  rewards a prediction containing $m$ classes with $1/m$ if it contains the true class and with 0 otherwise.
Within a betting framework based on fairly general assumptions, discounted-accuracy is the only score which satisfies some fundamental properties for assessing both determinate and indeterminate classifications.
In fact, for a determinate classification (a single class is returned)
discounted-accuracy corresponds to the traditional classification accuracy.
Yet discounted-accuracy has severe shortcomings.
Consider two medical doctors, doctor \textit{random} and doctor \textit{vacuous}, who should diagnose  whether a patient is \emph{healthy} or \emph{diseased}.  Doctor \textit{random} issues uniformly random diagnosis; doctor \textit{vacuous} instead always returns both
categories, thus admitting to be ignorant.  Let us assume that the hospital profits a quantity of money  proportional to the discounted-accuracy achieved by its doctors at each visit.  Both doctors have the same \textit{expected} discounted-accuracy for each visit,  namely $1/2$.  For the hospital, both doctors provide the same \textit{expected} profit from each visit, but with a substantial difference: the profit of doctor vacuous has no variance.  Any risk-averse hospital manager should thus prefer doctor vacuous over doctor random: under risk-aversion, the expected utility increases with expectation of the rewards  and decreases with their variance \citep{levy1979approximating}.
To model this fact, it is necessary to apply a utility function to the discounted-accuracy score assigned to each instance.
We designed the utility function according to \cite{Zaffalon20121282}:
the utility of a correct and determinate classification (discounted-accuracy 1) is 1;
the utility of a wrong classification (discounted-accuracy  0) is 0; 
the utility of an accurate but indeterminate classification consisting of two classes (discounted-accuracy 0.5)  is assumed to lie between 0.65 and 0.8.
Notice that, following the first two rules, the utility of a traditional classifier corresponds to its accuracy.
Two quadratic utility functions are derived, passing respectively  through $\{u(0)=0,u(0.5)=0.65,u(1)=1\}$ and  $\{u(0)=0,u(0.5)=0.8,u(1)=1\}$, denoted as $u_{65}$ and $u_{80}$ respectively.
Utility of credal classifiers and accuracy of determinate classifiers can be directly compared.

Figure \ref{fig:util}(a) compares the accuracy of BMA$_{ib}$
with the utility of CMA$_{ib}$, considering both $u_{65}$ and $u_{80}$ as
utility functions.
In both cases the utility of CMA$_{ib}$ is higher than the accuracy of BMA$_{ib}$;
the extension to imprecise probability proves valuable.
The gap is narrower under $u_{65}$  and larger under $u_{80}$, as the latter function
assigns higher value to the indeterminate classifications.
Moreover, the gap gets thinner as the sample size increases: as the data set grows large,
CMA$_{ib}$ becomes less indeterminate and thus closer to BMA$_{ib}$.

In Figure \ref{fig:util}(b) and \ref{fig:util}(c)  we
compare the different CMAs  using
$u_{65}$ and $u_{80}$.
According to $u_{65}$, the best performing model is CMA$_{ib}$, followed
by CMA$_{exp}$ and by CMA$_{nb}$.
The function $u_{65}$ assign  a limited value to the indeterminate classifications.
Thus under this utility the most  determinate algorithm (CMA$_{ib}$) achieves the highest score;
the least determinate (CMA$_{nb}$) achieves the lowest score.

The same situation is found under $u_{80}$, but only for small sample sizes ($n$ \textless 60).
For larger $n$,  CMA$_{nb}$ becomes the highest scoring CMA.
The point is that  CMA$_{nb}$ is the most imprecise model, and under $u_{80}$ the imprecision is highly rewarded.
Depending thus on the considered utility function, a different variant of CMA achieves the best performance.

These results are fully reasonable: each CMA provides a different trade-off between informativeness and robustness. Moreover, the two utility function represents two quite different types of risk aversion.
It can be expected that the they differently rank  the various CMAs.
Yet, it is somehow puzzling that CMA$_{exp}$ is never ranked as the top CMA, despite being the only algorithm which provides both a flexible model of prior and a robust elicitation of prior knowledge.

A partial explanation is that  the utility measures are derived assuming all the errors to be equally costly.
In a problem like ours missing a presence is likely to be much costlier than missing an absence.
Yet, there is currently no way to assess credal classifiers assuming 
unequal misclassification costs. 


\section{Conclusions}
BMA is the state of the art approach to deal with model uncertainty.
Yet, the results of the BMA analysis can well depend
on the prior which has been set over the models,
especially on small data sets.

To robustly deal with this problem, CMA adopts a \textit{set} of priors over the models rather than a single prior.
CMA automates sensitivity analysis and  detects  prior-dependent instances, on which BMA  is almost random guessing.
To identify the prior-dependent instances without using CMA, one would need to cross-check the predictions of many BMAs, each induced with a different prior over the models.
This would be very unpractical.

We have presented three different versions of CMA.
They represent different types of ignorance or partial knowledge a priori.
Experiments show that extending BMA to imprecise probability is indeed valuable.

However, deciding which variant of CMA performs better
is not easy, partially because the trade-off between robustness
and informativeness is a subjective matter and partially because
there are currently no score for assessing credal classifiers when the cost of the misclassification
errors are unequal.

An interesting avenue for future works is to develop CMA algorithms for the analysis of prior-data conflict.
This approach would allow for detecting major discrepancies between prior distribution and data, 
thus checking automatically the soundness of the opinion of the experts.
A recent proposal for prior-data conflict in the context of credal classification is discussed by \citep{Masegosa2013}.

\section*{Acknowledgments}
We are grateful to Dr. Bernat Claramunt L\'{o}pez (Center for Ecological Research and Forestry Applications, Unit of Ecology of the Autonomous University of Barcelona),
Prof. Walter Arnold (University of Veterinary Medicine in Vienna) and Mrs. Viviana Brambilla (Master student at the Universidade Tecnica de Lisboa)
who provided us with their prior probability of inclusion of the covariates.
The research in this paper has been partially supported by the Swiss NSF grants no. 200020-132252.
The work has been performed during Andrea Mignatti's PhD, supported by Fondazione Lombardia per l'Ambiente (project SHARE- Stelvio).
We are moreover grateful to the anonymous reviewers.

\appendix
\section*{Appendix A: solution of the CMA optimization problems.}
We show in the following how to solve the optimization problems (minimization and  maximization)
for IB-CMA.

Let us define the $k$  sets $\mathcal{M}_1 \ldots  \mathcal{M}_k$ which include all the models containing respectively $\{1,2,\ldots,k\}$ covariates.
For instance, $\mathcal{M}_2$ contains all the models which include two covariates.
The models included in the same set have the same prior probability; for instance the  prior probability of a model belonging to the set
$\mathcal{M}_j$ is $\theta^{j} (1-\theta)^{k-j}$.
We denote $L_{j}=\sum_{m_i \in \mathcal{M}_{j}} P(D|m_i)$.

The definition of variable $Z_{j}$ depends instead on the problem being addressed, as detailed in the following table:

\begin{center}
\begin{tabular}{lcl} \toprule
\textit{Problem} & \textit{Equation number} & Definition of $Z_j$\\ \midrule
Lower/Upper prob. of presence & (\ref{eq:min-prob}) & $ \sum_{m_i \in \mathcal{M}_{j}} P(c_1|D,\mathbf{x}) P(D|m_i)$\\
Lower/Upper prob. of inclusion of covariate $X_j$& (\ref{eq:min-prob-inclusion}) & $ \sum_{m_i \in \mathcal{M}_{j}} \rho_{ij} P(D|m_i)$\\ \bottomrule
\end{tabular}
\end{center}
where the binary variable $\rho_{ij}$ is 1 if model $m_i$ includes the covariate $X_j$ and 0 otherwise.

The function to be optimized (minimized or maximized) can be written as:
\begin{equation} \label{eq:pred_binom}
 h(\theta) := \dfrac {\sum_{j=0} ^{k} \theta^{j} (1-\theta)^{k-j} Z_{j}}{\sum_{j=0} ^{k} \theta^{j} (1-\theta)^{k-j}  L_{j} }
\end{equation}

In the interval [$\underline{\theta},\overline{\theta}$], the maximum and minimum of $h(\theta)$
should lie either in the boundary points $\theta=\overline{\theta}$ and $\theta=\underline{\theta}$, or in an internal point of the interval
in which the first derivative of $h(\theta)$ is 0.
Let us introduce $f(\theta)= \sum_{j=0} ^{k} \theta^{j} (1-\theta)^{k-j}  Z_{j}$ and $g(\theta)= \sum_{j=0} ^{k} \theta^{j} (1-\theta)^{k-j}  L_{j}$.
The first derivative $h'(\theta)$ is:

\begin{equation} \label{eq:solution}
  h'(\theta) = \dfrac{f'(\theta)g(\theta)-f(\theta)g'(\theta)}{g(\theta)^2},
\end{equation}

where $g(\theta)$ is strictly positive because $L_{j}$ is a sum of marginal likelihoods. We can  therefore search the solutions looking only at the numerator
$f'(\theta)g(\theta)-f(\theta)g'(\theta)$,
which is a polynomial of degree $k(k-1)$ and thus has $k(k-1)$ solutions in the complex plain.
We are interested only in the $real$ solutions that lie in the interval ($\underline{\theta},\overline{\theta}$).
Such solutions, together with the boundary solutions $\theta=\overline{\theta}$ and $\theta=\underline{\theta}$, constitute the set of \emph{candidate solutions}.
To find the minimum and the maximum  $h(\theta)$, we evaluate $h(\theta)$
in each candidate solution point, and eventually we retain the minimum or the maximum among such values.

\section*{Appendix B: the beta-binomial prior for Bayesian model averaging.}
The Beta-binomial (BB) prior is discussed for instance by \citep[Chap.3.2]{bernardo2009bayesian}. 
It  treats  parameter $\theta$ as a random variable with Beta prior distribution:
$\theta \sim Beta(\alpha, \beta)$.
The prior probability of model $m_i$ which includes $k_i$ covariates is obtained by marginalizing out the Beta distribution:
\begin{eqnarray*}
 P(m_i)= \int_0^1 \theta^{k_i} (1-\theta)^{k-k_i} p(\theta) d\theta=\nonumber \\
\int_0^1 \theta^{k_i} (1-\theta)^{k-k_i} \frac{\theta^{\alpha-1} (1-\theta)^{\beta-1}}{B(\alpha,\beta)} d\theta= \nonumber \\
\frac{\Gamma(\alpha+\beta)}{\Gamma(\alpha)\Gamma(\beta)}
\int_0^1 \theta^{\alpha+k_i-1} (1-\theta)^{\beta+k-k_i-1} d\theta=\nonumber \\
\frac{\Gamma(\alpha+\beta)}{\Gamma(\alpha)\Gamma(\beta)}
\frac{\Gamma(\alpha+k_i)\Gamma(\beta-k_i+k)}{\Gamma(\alpha+\beta+k)} \label{eq:int-beta}
\end{eqnarray*}
where the last passage leading to  Eqn.\ref{eq:int-beta} is explained considering that the Beta distribution integrates to 1:
\begin{eqnarray*}
\frac{\Gamma(\alpha+\beta+k)}{\Gamma(\alpha+k_i)\Gamma(\beta-k_i+k)}\int_0^1 \theta^{\alpha+k_i-1} (1-\theta)^{\beta+k-k_i-1} d\theta=1 
\Longrightarrow \\
\int_0^1 \theta^{\alpha+k_i-1} (1-\theta)^{\beta+k-k_i-1} d\theta = \frac{\Gamma(\alpha+k_i)\Gamma(\beta-k_i+k)}{\Gamma(\alpha+\beta+k)}
\end{eqnarray*}
Under the choice $\alpha=\beta=1$,  the Beta distribution becomes  \textit{uniform} and
the probability of model $m_i$ which contains $k_i$ covariates becomes:
\begin{equation*}
P(m_i)=\frac{\Gamma(\alpha+\beta)}{\Gamma(\alpha)\Gamma(\beta)}
\frac{\Gamma(\alpha+k_i)\Gamma(\beta-k_i+k)}{\Gamma(\alpha+\beta+k)}=
\frac{\Gamma(2)}{\Gamma(1)\Gamma(1)}
\frac{\Gamma(1+k_i)\Gamma(1-k_i+k)}{\Gamma(2+k)}=\frac{k_i!k-k_i!}{k+1!} 
\end{equation*}
This gives the prior probability of a model with $k_i$ covariates.
The probability of the model size $W$ to be equal to $k_i$ is obtained by 
combining Eqn.(\ref{eq:prob-beta-single}) with the observation that
that there are $k \choose k_i$ models which contain $k_i$ covariates:
\begin{equation*}
P(W=k_i)= P(m_i)\cdot{k \choose k_i} = \frac{k_i!k-k_i!}{k+1!} \frac{k!}{k-k_i!k_i!}=\frac{1}{k+1}
\,\ \forall m_i
\end{equation*}
The model size is thus uniformly distributed, as a result of having set a uniform prior on $\theta$.
\bibliography{biblio}
\bibliographystyle{model2-names}
\end{document}